\tikzstyle{Species}=[fill=white, draw=black, shape=circle]
\tikzstyle{Reaction}=[fill=white, draw=black, shape=rectangle]
\tikzstyle{SCC}=[fill={rgb,255: red,191; green,191; blue,191}, draw=black, shape=circle]
\tikzstyle{Arrow}=[->]
\newtheorem{notation}{Notation}
\crefname{lemme}{lemme}{lemmes}
\newcommand{\E}[1]{\mathbb{E}\left[#1\right]}
\renewcommand{\P}[1]{\mathbb{P}\{#1\}}
\def\d{\mathrm{d}}
\def\e{\mathrm{e}}
\def\l{\ell}
\def\N{\mathbb{N}}
\def\R{\mathbb{R}}
\begin{document}
\title{Graphical Conditions ensuring Equality between Differential and Mean Stochastic Dynamics}
\titlerunning{Mean stochastic behaviours}
%
\author{Hugo Buscemi\inst{1,2} \and François Fages\inst{2}\orcidID{0000-0001-5650-8266} }
\authorrunning{H. Buscemi, F. Fages.}
%
\institute{$^1$ {ENS Paris-Saclay, Univ. Paris-Saclay, France}\\
\email{Hugo.Buscemi@ens-paris-saclay.fr}\\
$^2$ {Inria Saclay, Palaiseau, France}\\
\email{Francois.Fages@inria.fr}}
\maketitle              
\begin{abstract}
  Complex systems can be advantageously modeled by formal reaction systems (RS), a.k.a.~chemical reaction networks in chemistry.
Reaction-based models can indeed be interpreted in a hierarchy of semantics, depending on the question at hand,
most notably by Ordinary Differential Equations (ODEs), Continuous Time Markov Chains (CTMCs), discrete Petri nets and asynchronous Boolean transition systems.
The last three semantics can be easily related in the framework of abstract interpretation.
The first two are classically related by
Kurtz's limit theorem which states that if reactions are density-dependent families,
then, as the volume goes to infinity, the mean reactant concentrations of the CTMC tends towards the solution of the ODE.
In the more realistic context of bounded volumes,
it is easy to show, by moment closure, that the restriction to reactions with at most one reactant ensures similarly that
the mean of the CTMC trajectories is equal to the solution of the ODE at all time points.
In this paper, we generalize that result in presence of polyreactant reactions,
by introducing the Stoichiometric Influence and Modification Graph (SIMG) of an RS,
and by showing that the equality between the two interpretations holds 
for the variables that belong to distinct SIMG ancestors of polyreactant reactions.
We illustrate this approach with several examples.
Evaluation on BioModels reveals that the condition for all variables is satisfied on models with no polymolecular reaction only.
However, our theorem can be applied selectively to certain variables of the model to provide insights into their behaviour within more complex systems.
Interestingly, we also show that the equality holds for a basic oscillatory RS implementing the sine and cosine functions of time.
\keywords{chemical reaction networks, Kurtz's theorem, continuous time markov chains, ordinary differential equations, oscillators, graphical conditions.}
\end{abstract}

\section{Introduction}

Inside cells, numerous chemical reactions can occur.
It is valuable to understand the behaviour and state of each chemical species involved in these reactions.
Such reactions can be represented by a Chemical Reaction Network (CRN),
which describes the species involved, their resulting outcomes, and the specific rates at which they react.
An Ordinary Differential Equation (ODE) can be associated with such a CRN~\cite{Feinberg77crt}.
Given a fixed initial state, each ODE has a unique deterministic solution.

On the other hand, a CRN can also be interpreted as a Continuous Time Markov Chain (CTMC)
and simulated by the Stochastic Simulation Algorithm (SSA), a.k.a.~Gillespie's Algorithm~\cite{gillespie1977exact}.
Simulations starting from the same initial state exhibit a variety of non-deterministic trajectories
and a mean trajectory.
This stochastic mean behaviour may or may not align with the deterministic solution of the ODE starting from the same initial state.
Kurtz's limit theorem~\cite{kurtz1970solutions} asserts that, under certain circumstances, the mean behaviour converges to the deterministic one
when the volume tends to the infinity.
These circumstances involve rates that are density-dependent,
possibly apart from a limited number of mode-switching transient states \cite{PH21tmcs}.

In the actual context of a cell however, the volume is bounded,
and there is also a finite, and possibly small, number of molecules for some species,
e.g.~as an extreme, only one for genes.
It is in this real-world model that we aim to study the relationship between the two continuous-time interpretations of CRNs,
by ODE and by CTMC.
Regarding the exact correspondence between the mean stochastic CTMC behaviours and the deterministic ODE behaviour, 
it is well known that if all reactions in the CRN have at most one reactant, then the equality holds.
This can be easily by the calculus of moments derived from the Chemical Master Equation\@\cite{anderson2015stochastic}.

Our contribution is an extension of the unimolecular case, while still ensuring equality.
Our result allows for polymolecular reactions with uncorrelated species, by guaranteeing the absence of correlations
through graph-theoretic conditions on the structure of the network.

For the sake of generality of these results for dynamical systems beyond the molecular interaction scale, outside of chemistry, 
we prefer to speak of formal reaction systems (RS) over mathematical variables~\cite{FGS15tcs},
rather than CRNs over molecular species.
Apart from this update of terminology, there is no difference between RSs and CRNs,
which are equipped with the same hierarchy of differential, stochastic, Petri net and Boolean semantics~\cite{FS08tcs}.

The rest of the paper is organized as follows.
The next section illustrates the discrepancy there is between the differential and stochastic interpretations
of a simple RS modeling the classical Lotka-Volterra prey-predator population dynamics system.
The following section provides the necessary preliminaries on reaction systems with mass action kinetics
and their differential and mean stochastic semantics, based respectively on ODE and CTMC interpretations.
In Sec.~\ref{sec:simg}, we introduce the Stoichiometric Influence and Modification Graph (SIMG)
which somehow combines the classical notions of reaction hypergraph and influence graph of a reaction system~\cite{FMRS18tcbb}.
That notion of SIMG is used in Sec.~\ref{sec:theorem} to express and prove our main theorem of strong equality between
the differential ODE and mean stochastic CTMC interpretations of RSs satisfying some structural conditions,
independently of the kinetics and any limit considerations.
In Sec.~\ref{sec:algo}, we provide a low complexity algorithm to test our SIMG conditions.
Then in the following sections, we provide several examples to illustrate these results.
Interestingly, we also show that strong equality property for an oscillatory RS
which implements the sine and cosine functions of time in both differential and stochastic interpretations.
Then we report on the evaluation of our condition in the repository of RS models Biomodels~\cite{BioModels2020}.
Finally, we conclude on some perspectives for future work with three open questions.

\section{Motivating Example}\label{sec:lv}

\begin{figure}[!h]
  \centering
  \includegraphics[scale=0.38]{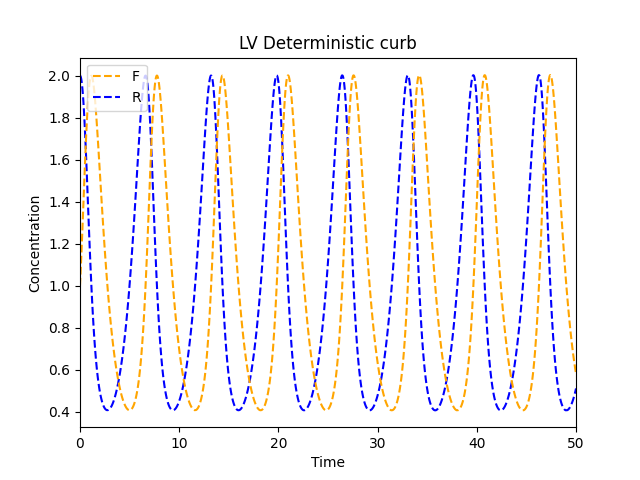}
  \caption{Deterministic ODE simulation of Lotka-Volterra reaction system.}\label{fig: LV_ODE}
\end{figure}
\begin{figure}[!h]
  \centering
  \begin{subfigure}[b]{0.49\textwidth}
    \includegraphics[width=\textwidth]{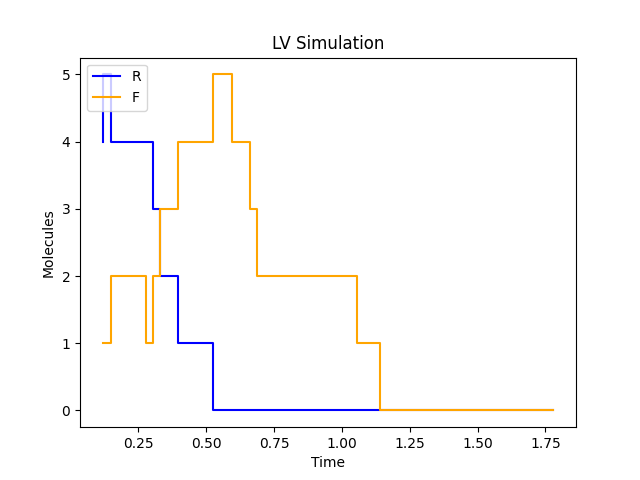}
    \caption{Standard Lotka-Volterra system}\label{fig: Lotka-Volterra}
  \end{subfigure}
  \hfill
  \begin{subfigure}[b]{0.49\textwidth}
    \includegraphics[width=\textwidth]{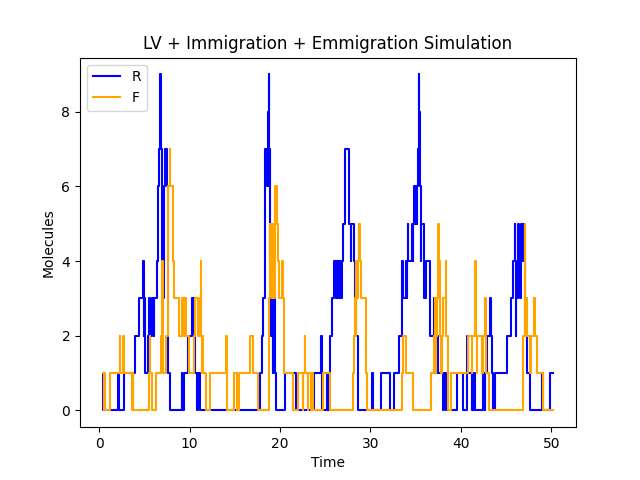}
    \caption{plus immigration and emigration rules.}\label{fig: Lotka-Volterra I&E}
  \end{subfigure}
  \caption{Stochastic CTMC simulations of Lotka-Volterra reaction system.}
\end{figure}

The Lotka-Volterra reaction system, $F+R\rightarrow 2.F,\ R\rightarrow 2.R,\ F\rightarrow\emptyset$,
is a classical model of interaction between two populations of preys (rabbits $R$) and predators (foxes $F$).
When interpreted using ODEs, the system exhibits oscillatory behaviour, with the populations of rabbits and foxes fluctuating over time (see Fig.~\ref{fig: LV_ODE}).
However, when interpreted by a CTMC, the system leads almost surely to the extinction of the predator or both populations (see Fig.~\ref{fig: Lotka-Volterra}).
This is due to the presence  in the stochastic semantics of an absorbing state 0
that is neither reached in the continuous semantics,
nor by Kurtz's limit mean CTMC trajectory obtained by making the volume tends to infinity~\cite{kurtz1970solutions}.
This is one reason why Kurtz's theorem should not be applied to this kind of counter-example.

On the other hand, one can remark that the addition of immigration and emigration reactions
prevents the system from reaching an absorbing state where one population is completely extinct.
By allowing for the possibility of individuals entering and leaving the system,
there is no absorbing state anymore and the populations exhibit sustained stochastic oscillations (see Fig.~\ref{fig: Lotka-Volterra I&E}).

\section{Preliminaries on Reaction Systems}

We recall here the standard notions of Chemical Reaction Networks (CRN) over molecular species~\cite{Feinberg77crt},
by using a more general terminology of Reaction Systems (RS) over mathematical variables,
in order to apply them to population dynamics or dynamical processes that have nothing to do with chemistry.
Apart from this change of terminology to reflect the generality of that formalism of reaction systems,
there is no difference between both formalisms of RSs and CRNs.

\begin{definition}
  A reaction $\rho$ over a set of variables $X$ is a triple $(R, P, \Lambda)$, also noted
\[\rho : \sum_i\l_i R_i \Rightarrow^{\Lambda} \sum_j r_j P_j\]
where $R, P$ are multisets of variables,
$R_i, P_j \in X$ are the elements of $R$, $P$ with non-zero {coefficient multiplicity} $\l_i,r_j\in \N^*$,
and $\Lambda:X\rightarrow\mathbb{R}^+$ is a rate function.

  A Reaction System (RS) over $X$ is a finite set of reactions over $X$.
\end{definition}

The $R_i$'s are called reactants and the $P_i$'s products.
A reaction with no reactant is called a \emph{synthesis} reaction.
A reaction with only one reactant and coefficient equal to 1 is called a \emph{unireactant} reaction.
Other reactions are \emph{polyreactant} reactions.
A polyreactant reaction with no product is called an \emph{annihilation} reaction.

\begin{notation}
  For a reaction $\rho$ and a variable $x\in X$, we write $\zeta_{\rho, x} = r_{\rho, x} - \l_{\rho, x}$
  the difference of coefficients between the production of product $x$ and the consumption of reactant $x$ in $\rho$.
\end{notation}

In this paper, we shall consider mass action law rate functions only.

\begin{definition}[Mass-Action Law]
  The mass action law rate of a reaction $(R,P,\Lambda)$ is of the form
  $\Lambda = \lambda\prod_i\prod_{k = 0}^{\l_i - 1} (R_i - k)$ for some rate constant $\lambda\in\R$.
\end{definition}

\begin{example}\label{first exemple}
  Mass action RS are written with just their rate constant, which can also be omitted if equal to 1, as in the reactions below:
\begin{center}
  {$\begin{aligned}
    A & \Rightarrow^2 A + B\\
    \_ & \Rightarrow B\\
    B + C & \Rightarrow^6 B + D\\
    D & \Rightarrow \_\\
\end{aligned}$}
\end{center}
\end{example}

\begin{definition}[Associated ODE]
The Ordinary Differential Equation (ODE) of an RS associates to each variable $x\in X$ the differential equation
\[\frac{\d x}{\d t} = \sum_\rho \Lambda_\rho(X)\zeta_{\rho, x}\]
{We denote by $\sigma$ the unique solution of the system starting at the initial state $\sigma(0)$, and {$\sigma_x$ its projection on the variable $x$}}
\end{definition}

\begin{example}
  The ODE associated with the RS of Ex.~\ref{first exemple} is
\begin{center}
  {$\begin{aligned}
    \frac{\d A}{\d t} & = 0 &
    \frac{\d B}{\d t} & = 2A + 1\\
    \frac{\d C}{\d t} & = -6BC &
    \frac{\d D}{\d t} & = 6BC - D\\
  \end{aligned}$}
\end{center}
\end{example}

\begin{definition}[Associated Continous Time Markov Chain]
  The Continous Time Markov Chain (CTMC) associated to an RS is defined by the following transition probabiility
  \[\P{X(t+\Delta t) = j | X(t) = i} \sim_{\Delta t\to 0} q_{ij}\Delta t\]
  where
  \[q_{X(t), X(t)+\zeta} = \sum_{\rho : r_\rho - l_\rho = \zeta} \Lambda_\rho(X(t))\]
\end{definition}

We shall make use of the following properties shown in~\cite{kurtz1970solutions} of the CTMC associated to an RS.

\begin{proposition}\label{eq: diff esp}
  The mean of $X(t)$ at each point of time follows the equation 
  \[\frac{\d \E{X(s)}}{\d t} = \sum_{\rho}\E{\Lambda_\rho(X(s))}\zeta_\rho\]
\end{proposition}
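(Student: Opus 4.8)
The plan is to derive this first-moment identity directly from the Chemical Master Equation governing the CTMC, by differentiating the expectation under the sum over states and performing a reindexing of the state space. Writing $P(x,t) = \P{X(t) = x}$ for the probability of being in state $x$ at time $t$, the master equation reads
\[\frac{\d P(x,t)}{\d t} = \sum_\rho \left[ \Lambda_\rho(x - \zeta_\rho)\, P(x - \zeta_\rho, t) - \Lambda_\rho(x)\, P(x, t) \right],\]
expressing that probability flows into $x$ when a reaction $\rho$ fires from $x - \zeta_\rho$, and out of $x$ when $\rho$ fires from $x$ itself. This is simply the forward Kolmogorov equation associated with the generator defined by the rates $q_{ij}$ of the preceding definition.

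First I would write $\E{X(t)} = \sum_x x\, P(x,t)$ and differentiate, assuming enough regularity to exchange the derivative with the sum, giving $\frac{\d}{\d t}\E{X(t)} = \sum_x x\, \frac{\d P(x,t)}{\d t}$. Substituting the master equation splits the right-hand side, for each reaction $\rho$, into an inflow term $\sum_x x\, \Lambda_\rho(x - \zeta_\rho) P(x - \zeta_\rho, t)$ and an outflow term $-\sum_x x\, \Lambda_\rho(x) P(x, t)$.

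The key algebraic step is the shift of summation variable $y = x - \zeta_\rho$ in the inflow term, which rewrites it as $\sum_y (y + \zeta_\rho)\, \Lambda_\rho(y) P(y, t)$. Subtracting the outflow term then cancels the $y\,\Lambda_\rho(y)P(y,t)$ contributions and leaves exactly $\zeta_\rho \sum_y \Lambda_\rho(y) P(y,t) = \zeta_\rho\, \E{\Lambda_\rho(X(t))}$. Summing over all reactions $\rho$ yields the claimed identity.

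The main obstacle is a matter of rigor rather than computation: justifying the interchange of $\frac{\d}{\d t}$ with the (possibly infinite) sum over states, and ensuring the reindexed series converge absolutely so that the boundary terms arising from the shift $y = x - \zeta_\rho$ vanish. On a bounded volume with finitely many reachable states these concerns disappear, and in general they follow from the integrability conditions standard in the theory of density-dependent Markov chains as in~\cite{kurtz1970solutions}. Equivalently, and perhaps more cleanly, I would invoke Dynkin's formula for the generator $\mathcal{A}$ acting by $(\mathcal{A} f)(x) = \sum_\rho \Lambda_\rho(x)\,[f(x + \zeta_\rho) - f(x)]$: applied to the coordinate function $f(x) = x$ this gives $(\mathcal{A}f)(x) = \sum_\rho \Lambda_\rho(x)\,\zeta_\rho$, and the identity $\frac{\d}{\d t}\E{f(X(t))} = \E{(\mathcal{A}f)(X(t))}$ delivers the result immediately by linearity of expectation.
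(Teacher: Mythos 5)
Your proof is correct, but note that the paper does not actually give its own proof of this proposition: it is presented as a property ``shown in~\cite{kurtz1970solutions}'', imported together with the generator and its martingale property (Proposition~2). Your closing alternative --- applying $\frac{\d}{\d t}\E{f(X(t))} = \E{Af(X(t))}$ to the coordinate function $f(x)=x$, for which $Af(x) = \sum_\rho \Lambda_\rho(x)\,\zeta_\rho$ --- is precisely how the statement sits inside the paper's own framework: Proposition~1 is the specialization of Proposition~2 to coordinate functions, so that route is ``the paper's proof'' in all but name. Your primary route through the Chemical Master Equation is a genuinely different and more elementary derivation: it needs only the forward Kolmogorov equation and the index shift $y = x - \zeta_\rho$ (with the standard convention that mass-action rates vanish on infeasible states, so the shifted sum introduces no boundary terms), and it makes the cancellation mechanism explicit rather than hiding it in the martingale machinery. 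The price is the interchange-of-derivative-and-sum issue you flag, which is a real gap in full generality --- for unbounded state spaces with polynomial mass-action rates, moments can fail to be finite or can explode in finite time, so some integrability hypothesis is genuinely needed --- but the paper itself glosses over exactly the same point by deferring to Kurtz, so your treatment is at least as careful as the source, and strictly more informative for a reader who wants to see where the formula comes from.
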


\begin{definition}[Generator]
  The generator of the CTMC is the function $A$ defined for each function $f$ by
  \[Af(x) = \sum_\rho{\Lambda_\rho}(x)(f(x+\zeta_\rho) - f(x))\]
\end{definition}

\begin{proposition}\label{eq: esp fct}
  $f(X(t)) - f(X(0)) - \int_{0}^{t}Af(X(s))\d s$
  is a martingale and so

  \noindent
  $\E{f(X(t))} = \E{f(X(0))} + \int_{0}^{t} \E{Af(X(s))}\d s.\hfill \displaystyle{\frac{\d\E{f(X(s))}}{\d t} = \E{Af(X(t))} }$
\end{proposition}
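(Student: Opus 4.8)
The plan is to recognise this as Dynkin's formula (the martingale problem) for a pure-jump Markov process, and to establish it through the counting-process representation of the CTMC, which is the most natural route in the reaction-system setting and the one underlying Kurtz's original treatment.

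First I would represent the trajectory of the CTMC as $X(t) = X(0) + \sum_\rho \zeta_\rho N_\rho(t)$, where each $N_\rho$ is the counting process recording the number of firings of reaction $\rho$ up to time $t$, and $N_\rho$ has stochastic intensity $\Lambda_\rho(X(s))$ with respect to the natural filtration $(\F_t)$. Since $X$ is piecewise constant and jumps by $\zeta_\rho$ at each firing of $\rho$, the increment $f(X(t)) - f(X(0))$ telescopes over the jump times, and grouping the jumps according to the reaction that caused them yields the representation $f(X(t)) - f(X(0)) = \sum_\rho \int_0^t (f(X(s^-)+\zeta_\rho)-f(X(s^-)))\,\d N_\rho(s)$.

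Next I would compensate each counting process: by the Doob--Meyer decomposition (Watanabe's characterisation of point processes), $\tilde N_\rho(t) = N_\rho(t) - \int_0^t \Lambda_\rho(X(s))\,\d s$ is an $(\F_t)$-martingale, and the stochastic integral of a bounded predictable integrand against a martingale is again a martingale. Subtracting the compensators term by term leaves exactly $\int_0^t \sum_\rho \Lambda_\rho(X(s))(f(X(s)+\zeta_\rho)-f(X(s)))\,\d s = \int_0^t Af(X(s))\,\d s$, so that $f(X(t)) - f(X(0)) - \int_0^t Af(X(s))\,\d s$ is the sum of the remaining martingale parts and is therefore itself a martingale. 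Taking expectations, a martingale started from $0$ has constant expectation $0$, which gives $\E{f(X(t))} - \E{f(X(0))} = \E{\int_0^t Af(X(s))\,\d s}$; Fubini's theorem then moves the expectation inside the integral to produce the stated integral equation, and differentiating in $t$ recovers the pointwise form on the right.

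The main obstacle is integrability: to guarantee that the objects above are true martingales rather than merely local martingales, and that Fubini applies, one needs $\E{\int_0^t \Lambda_\rho(X(s))\,\d s} < \infty$ together with control on the jump sizes $f(X(s)+\zeta_\rho)-f(X(s))$, and one must rule out explosion of the chain in finite time (a genuine concern with mass-action rates that grow polynomially on an unbounded state space). In the bounded-volume setting of this paper the reachable state space is finite, so the rates and increments are uniformly bounded and all of these conditions hold automatically; in the general case they must be imposed as hypotheses, which is precisely the regularity already assumed in the source~\cite{kurtz1970solutions} from which the proposition is quoted.
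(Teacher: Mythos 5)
The paper itself does not prove this proposition: both Prop.~\ref{eq: diff esp} and Prop.~\ref{eq: esp fct} are quoted as ``properties shown in~\cite{kurtz1970solutions}'' and used as black boxes. Your proposal therefore supplies an argument the paper delegates entirely to the literature. The route you take --- the counting-process representation $X(t)=X(0)+\sum_\rho \zeta_\rho N_\rho(t)$, the telescoping of $f(X(t))-f(X(0))$ over jump times grouped by reaction, and the Doob--Meyer/Watanabe compensation of each $N_\rho$ by $\int_0^t \Lambda_\rho(X(s))\,\d s$ --- is the standard modern proof of Dynkin's formula for pure-jump Markov processes, and it is indeed the argument underlying Kurtz's treatment of density-dependent chains. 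The mechanics are right: the integrands $f(X(s^-)+\zeta_\rho)-f(X(s^-))$ are predictable, the compensated stochastic integrals are martingales under suitable integrability, taking expectations kills them, Fubini moves the expectation inside the time integral, and differentiation gives the pointwise form. So the proof is correct in structure and genuinely more informative than what the paper offers.

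One claim in your integrability discussion is wrong and should be removed: ``in the bounded-volume setting of this paper the reachable state space is finite.'' Bounded volume in this paper means only that the volume parameter is held fixed rather than sent to infinity as in Kurtz's limit theorem; it does not bound molecule counts. Any RS with a synthesis or autocatalytic reaction has an unbounded reachable state space --- e.g.\ $\_\Rightarrow B$ in Ex.~\ref{first exemple}, or $R\Rightarrow 2R$ in the Lotka--Volterra system of Sec.~\ref{sec:lv} --- so mass-action rates are not uniformly bounded along trajectories, and non-explosion plus $\E{\int_0^t\Lambda_\rho(X(s))\,\d s}<\infty$ do not come for free (a reaction such as $2X\Rightarrow 3X$ genuinely explodes in finite time). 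Your fallback sentence is the correct resolution: these conditions must be imposed as hypotheses, which is exactly the regularity assumed in~\cite{kurtz1970solutions} and implicitly inherited by the paper. The proof stands once the ``automatic'' claim is replaced by that hypothesis.
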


When $\zeta_{\rho, x} = 0$ the reaction $\rho$ has no direct influence on the variable $x$,
and one can focus on reactions where $\zeta_{\rho, x} \neq 0$ only.

\section{Stoichiometric Influence and Modification Graph}\label{sec:simg}

In this section, we introduce a special kind of influence graph associated to a reaction system,
which will be used to define our structural conditions ensuring the equality
between the mean stochastic and differential interpretations of a given RS.

\begin{definition}[SIMG]
The Stoichiometric Influence and Modification Graph (SIMG) of an RS is a bipartite multigraph
with variables and reactions as vertices,
with $\l_{\rho, x}$ arcs from variable $x$ to reaction $\rho$,
and one additional arc from $\rho$ to $x$ if $\zeta_{\rho, x} \neq 0$.
\end{definition}

\begin{example}\;
  THe SIMG of the RS of Ex.~\ref{first exemple} is the following:

    \centerline{\tikzfig{simg_example}}
  
\end{example}

Once this directed graph is constructed, the directed acyclic graph (DAG) of its strongly connected components (SCCs) can be created.
This DAG provides a hierarchy and depth for each node (reactions and variables) within the graph.
In this graph, each node's ancestors are all the SCCs (including itself) higher in the hierarchy and leading to it.

\begin{definition}[Ancestors]
  A variable $y\in X$ is called an ancestor of a variable $x\in X$
  if there exists a path in the SIMG from $y$ to $x$.
  Especially, $x$ is its own ancestor.
\end{definition}

\begin{definition}[Depth]
{The depth of a SCC is defined recursively as follows:
\begin{itemize}
    \item All SCCs with no arc leading to them are of depth $0$.
    \item Any other SCC is at depth $n+1$ where $n$ is the maximum depth of the SCCs leading to it.
\end{itemize}
By abuse of notation, we call the depth of a vertex $y$, the depth of the SCC of $y$.}
\end{definition}

\begin{example}\label{non trivial exemple}
  We show in the next section that the SIMG of the following RS satisfies graphical conditions
  that entail the equality between the deterministic ODE solution and the mean CTMC behaviour.
  Simulation traces are shown in Fig.~\ref{non trivial exemple plot}.
  The actual difference between the mean CTMC trace and the determinist ODE solution is shown in Fig.~\ref{non trivial exemple diff},
  using a scale that shows that the difference is indeed neglectable.

  \begin{center}
  \begin{tabular}{ccc}
    {$\begin{aligned}
      A + B & \Rightarrow A + B + C\\
      A + B & \Rightarrow A + B + D\\
      C & \Rightarrow D\\
    \end{aligned}$} &\ \ \ \ \ \ \ \ \ \ \ \ \ \ \ &
  \tikzfig{non_trivial_exemple} \\
  \end{tabular}
  \end{center}

  \begin{figure}[h!]
    \begin{subfigure}[b]{0.49\textwidth}
      \includegraphics*[width=\textwidth]{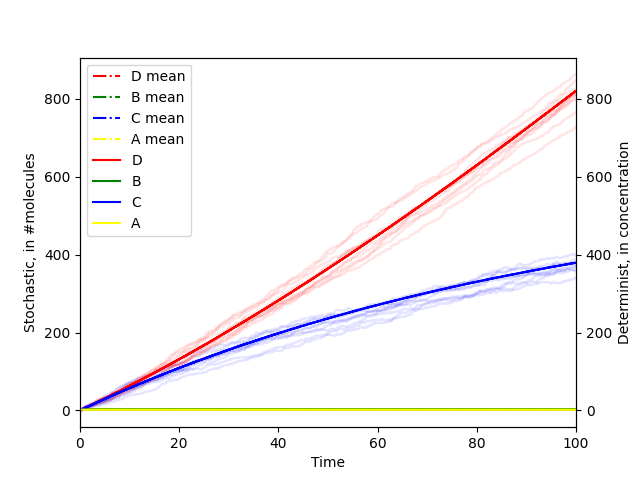}
      \caption{Stochastic mean trace and deterministic solution}\label{non trivial exemple plot}
    \end{subfigure}
    \hfill
    \begin{subfigure}[b]{0.49\textwidth}
      \includegraphics*[width=\textwidth]{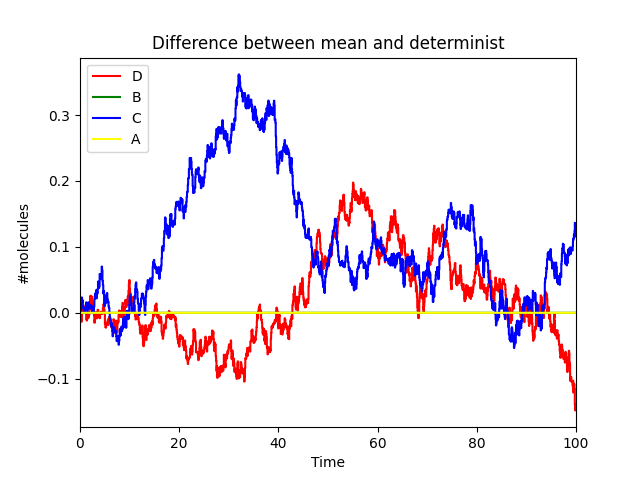}
      \caption{Difference between mean stochastic and determinist solutions}\label{non trivial exemple diff}
    \end{subfigure}
    \caption{Simulation traces for Ex.~\ref{non trivial exemple} based on 10000 stochastic traces.}\label{fig: non trivial exemple}
  \end{figure}

\end{example}

\section{Main Theorem}\label{sec:theorem}

\begin{theorem}\label{thm}
  For an RS with the mass-action law, if each reaction is
  \begin{enumerate}
    \item either at most unireactant,
    \item or polyreactant and has all reactants with disjoint SIMG ancestors,
  \end{enumerate}
  then the mean stochastic behaviour, starting from a fixed initial state, is equal to the differential solution with the same initial state.
\end{theorem}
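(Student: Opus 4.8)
The plan is to reduce the desired equality to a moment-closure identity for reaction rates, and then to obtain that identity from an independence property read off the SIMG. By \Cref{eq: diff esp}, the mean satisfies $\frac{\d\E{X(s)}}{\d t} = \sum_\rho \E{\Lambda_\rho(X(s))}\zeta_\rho$, whereas the differential solution obeys $\frac{\d\sigma}{\d t} = \sum_\rho \Lambda_\rho(\sigma)\zeta_\rho$, both starting from the same deterministic initial state. Since mass-action rates are polynomial, hence locally Lipschitz, uniqueness of solutions (Cauchy--Lipschitz) shows it suffices to prove, for every reaction $\rho$ and every time $t$, the identity $\E{\Lambda_\rho(X(t))} = \Lambda_\rho(\E{X(t)})$: this makes $\E{X}$ and $\sigma$ obey the very same ODE with the same initial condition, forcing them to coincide. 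Crucially, I would establish this identity \emph{unconditionally} (not assuming $\E{X}=\sigma$), so there is no circularity.

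First I would dispatch the two cases of the hypothesis. For a synthesis or at-most-unireactant reaction the rate $\Lambda_\rho$ is affine in $X$, so $\E{\Lambda_\rho(X)} = \Lambda_\rho(\E{X})$ follows at once from linearity of expectation. For a polyreactant reaction satisfying condition~2, I first observe that pairwise disjointness of the ancestor sets forces every reactant coefficient to equal $1$ and the reactants to be pairwise distinct variables $R_1,\dots,R_m$, since two occurrences of one variable (or a coefficient $\geq 2$) would share ancestors with themselves. Hence the mass-action rate is the multilinear monomial $\Lambda_\rho(X) = \lambda\,R_1\cdots R_m$, and the identity reduces to the factorization $\E{R_1(t)\cdots R_m(t)} = \E{R_1(t)}\cdots\E{R_m(t)}$.

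The heart of the argument, and the step I expect to be the main obstacle, is to show that variables with pairwise disjoint SIMG ancestors are mutually independent, so that this factorization holds. I would first prove a closure lemma: for a variable $x$, the set $\mathrm{Anc}(x)$ of its SIMG ancestors is self-contained, in that any reaction modifying a variable of $\mathrm{Anc}(x)$ lies in $\mathrm{Anc}(x)$ and has all its reactants in $\mathrm{Anc}(x)$. Indeed a modification arc $\rho\to z$ with $z\in\mathrm{Anc}(x)$ yields a path $\rho\to z\rightsquigarrow x$, so $\rho\in\mathrm{Anc}(x)$, and each reactant $w$ contributes an arc $w\to\rho$, whence $w\in\mathrm{Anc}(x)$. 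Consequently the sub-CTMC carried by the variables of $\mathrm{Anc}(x)$ is autonomous, driven only by reactions internal to $\mathrm{Anc}(x)$ whose propensities depend only on those same variables.

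Finally I would show that disjoint ancestor sets never share a driving reaction: if a single reaction $\rho$ modified variables $z_1\in\mathrm{Anc}(x_1)$ and $z_2\in\mathrm{Anc}(x_2)$ with these sets disjoint, then $\rho$ would be a common ancestor of both $x_1$ and $x_2$, a contradiction; the modification arcs of the SIMG are exactly what make this rigorous, and they equally rule out a shared synthesis producing two such variables. Therefore the generator restricted to $\bigcup_j \mathrm{Anc}(x_j)$ splits as a sum of block generators acting on disjoint variable sets with disjoint reaction sets, i.e.\ the sub-processes are governed by independent Poisson mechanisms and, starting from a deterministic initial state, are mutually independent at every time $t$. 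Applying this to the reactants of each polyreactant reaction gives the factorization, hence the theorem. The delicate points to treat carefully are the precise reading of disjointness as including reaction vertices, and the rigorous decomposition of the restricted CTMC into a product of independent chains.
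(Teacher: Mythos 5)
Your proposal is correct (at the same level of technical rigour as the paper, which likewise glosses over non-explosion and finiteness of moments), and its core runs along a genuinely different route from the paper's. The reduction is identical: both you and the paper use Prop.~\ref{eq: diff esp} to write the ODE satisfied by $\E{X}$, observe that a reactant is its own ancestor so all reactant coefficients are $1$, handle synthesis and unireactant rates by linearity of expectation, and conclude by uniqueness of ODE solutions once the factorization $\E{R_1\cdots R_m}=\E{R_1}\cdots\E{R_m}$ is in hand. The difference is how that factorization is obtained. The paper proves it as Lemma~\ref{lemmaProd} by an intricate double induction on SIMG depth (and on tuples counting variables per depth), writing for each product of variables a closed system of moment ODEs and matching it, term by term, against the system satisfied by the product of means. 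You instead prove genuine stochastic independence of the reactant processes: ancestor sets are closed under taking reactants of modifying reactions, hence carry autonomous sub-CTMCs; disjoint ancestor sets share no modifying reaction; so the generator splits into blocks acting on disjoint coordinates with disjoint reaction sets, and (e.g.\ via the random time-change representation with disjoint Poisson clocks, or uniqueness of the martingale problem) the blocks are mutually independent given the deterministic initial state. Your route is more conceptual, yields a strictly stronger conclusion (independence of the processes, not merely factorization of one moment), and trades the induction's bookkeeping for one structural lemma; its cost is heavier probabilistic machinery where the paper stays with elementary ODE comparisons. Finally, the point you flag as delicate is a real one, and you resolve it more carefully than the paper does: disjointness must be read on ancestor sets \emph{including reaction vertices}. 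Under a variables-only reading, the RS $\_\Rightarrow Z_1+Z_2$, $Z_1+Z_2\Rightarrow Z_1+Z_2+W$ satisfies the hypotheses ($\mathrm{Anc}(Z_1)=\{Z_1\}$ and $\mathrm{Anc}(Z_2)=\{Z_2\}$ are disjoint), yet $Z_1$ and $Z_2$ jump together, $\E{Z_1Z_2}\neq\E{Z_1}\E{Z_2}$, and the conclusion fails for $W$; the shared synthesis is excluded only once the reaction vertex counts as a common ancestor. The paper's own proof needs the same reading — its assertion in Lemma~\ref{lemmaProd} that variables with no common ancestors "are not modified by a same reaction" has exactly this synthesis loophole — so your explicit treatment closes a gap rather than opening one.
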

\begin{proof}
  Let us examine the different types of reactions that can occur in the RS\@.
  As a reminder, they have the form:
  \[\rho : \sum_i\l_{\rho, R_i} R_i \Rightarrow^{\Lambda_\rho} \sum_j r_{\rho, P_j} P_j\]
  A first observation is that each variable is its own ancestor,
  which means it cannot have multiple arcs leading to the same reaction.
  Hence, for any reaction $\rho$ and variable $x$, we have $\l_{\rho, x} \in \{0, 1\}$.

  Therefore, for a specific variable $x$, reactions with $\zeta_{\rho, x} \neq 0$ can be categorized into the following five sets according to the reactants and their ancestors:
  \begin{enumerate}
    \item $\pi_0{(x)} = \{ \rho : \_ \Rightarrow^{\Lambda_\rho} r_{\rho, x} x + \sum_j r_{\rho, P_j} P_j\mid r_{\rho, x} > 0\}$
    \item $\pi_1{(x)} = \{ \rho : x \Rightarrow^{\Lambda_\rho} r_{\rho, x} x + \sum_j r_{\rho, P_j} P_j\mid r_{\rho, x} \neq 1\}$
    \item $\pi_1'{(x)} = \{ \rho : R \Rightarrow^{\Lambda_\rho} r_{\rho, x} x + \sum_j r_{\rho, P_j} P_j\mid r_{\rho, x} > 0, R \neq x\}$
    \item $\pi_2{(x)} = \{ \rho : x + \sum_i R_i \Rightarrow^{\Lambda_\rho} r_{\rho, x} x + \sum_j r_{\rho, P_j} P_j\mid r_{\rho, x} \neq 1, R_i \neq x\}$ and $x$ and all $R_i$ have disjoint ancestors.
    \item $\pi_2'{(x)} = \{ \rho : \sum_i R_i \Rightarrow^{\Lambda_\rho} r_{\rho, x} x + \sum_j r_{\rho, P_j} P_j\mid r_{\rho, x} > 0, R_i \neq x\}$ and $R_i$ have disjoint ancestors
  \end{enumerate}
  
  It is worth noting that if $\rho \in \pi_2{(x)}$, then reactant $R_0$ is an ancestor of both $x$ and $R_0$ itself.
  However, this violates the disjoint ancestors property condition.
  Therefore, we conclude that $\pi_2{(x)} = \emptyset$.

  Now, we can express the set of differential equations for the mean value of each variable:
  \begin{align*}
    \frac{\d\E{x}}{\d t} & = \sum_{\rho\in \pi_0{(x)}} \lambda_\rho\zeta_{\rho, x}\\
        & + \sum_{\rho\in \pi_1{(x)}} \lambda_\rho\E{x}\zeta_{\rho, x} + \sum_{\rho\in \pi_1'{(x)}} \lambda_\rho\E{R}\zeta_{\rho, x}\\
        & + \sum_{\rho\in \pi_2'{(x)}} \lambda_\rho\E{\prod_i R_i}\zeta_{\rho, x}\\
  \end{align*}
  Now, by applying Lemma~\ref{lemmaProd} proven below, the differential equation becomes:
  \begin{align*}
    \frac{\d\E{x}}{\d t} & = \sum_{\rho\in \pi_0{(x)}} \lambda_\rho\zeta_{\rho, x}\\
        & + \sum_{\rho\in \pi_1{(x)}} \lambda_\rho\E{x}\zeta_{\rho, x}  + \sum_{\rho\in \pi_1'{(x)}} \lambda_\rho\E{R}\zeta_{\rho, x}\\
        & + \sum_{\rho\in \pi_2'{(x)}} \lambda_\rho\prod_i \E{R_i}\zeta_{\rho, x}\\
  \end{align*}

  This is exactly the same set of differential equations satisfied by the differential solution $\sigma$.
  Furthermore, we have $\E{X}(0) = \sigma(0)$.
  Therefore, we conclude that $\E{X} = \sigma$.

\end{proof}
{
\begin{lemma}\label{lemmaSCC}
    Under the hypotheses of the theorem, a reaction is either alone in its SCC, or all the reactions of its SCC are unireatant reactions.
\end{lemma}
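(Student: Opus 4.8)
The plan is to prove the equivalent statement that every reaction which is \emph{not} unireactant---that is, every synthesis or polyreactant reaction---is the only reaction (indeed the only vertex) of its SCC. Once this is established, any SCC containing at least two reactions can contain neither a synthesis nor a polyreactant reaction, so all of its reactions are unireactant, which is exactly the claimed dichotomy. I would therefore argue case by case over the kinds of reaction allowed by the hypothesis, the crux being that a non-unireactant reaction cannot lie on any cycle of the SIMG.

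The synthesis case is immediate: a reaction with no reactant has no incoming arc in the SIMG (all arcs into a reaction issue from its reactants), hence lies on no cycle and forms a singleton SCC.

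The polyreactant case is where I expect the real work. First I would invoke the observation already used in the proof of the theorem: since each variable is its own ancestor and the reactants of a polyreactant reaction have pairwise disjoint ancestor sets, every reactant coefficient equals $1$ and such a reaction has at least two \emph{distinct} reactants. Suppose, for contradiction, that a polyreactant reaction $\rho$ lies on a cycle $C$. Because the only arcs entering $\rho$ come from its reactants and the only arcs leaving it go to variables $x$ with $\zeta_{\rho,x}\neq 0$, the cycle $C$ must enter $\rho$ along some arc $R_j\to\rho$ and leave it along some arc $\rho\to x$; the remaining part of $C$ then provides a path $x\rightsquigarrow R_j$, so $x$ is an ancestor of $R_j$. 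Choosing any other reactant $R_{j'}\neq R_j$ (available since $\rho$ has at least two reactants), the two arcs $R_{j'}\to\rho\to x$ make $R_{j'}$ an ancestor of $x$, hence by transitivity an ancestor of $R_j$. As $R_{j'}$ is trivially an ancestor of itself, it is a common ancestor of the two distinct reactants $R_j$ and $R_{j'}$, contradicting the disjoint-ancestors hypothesis. Thus $\rho$ lies on no cycle and forms a singleton SCC.

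The only point requiring care is the bookkeeping of the cycle through $\rho$: one must argue that, the SIMG being bipartite, a cycle through the reaction vertex $\rho$ necessarily enters via a reactant arc and exits via a modification arc, and one should check the degenerate length-two cycle $\rho\to x\to\rho$ separately (there $x=R_j$, the path $x\rightsquigarrow R_j$ is trivial, and the contradiction via $R_{j'}$ is obtained verbatim). With both cases in hand, every non-unireactant reaction is alone in its SCC, and the lemma follows.
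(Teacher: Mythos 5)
Your proof is correct, and it takes the route the paper intends: the paper's own ``proof'' of this lemma is literally the single sentence ``By simple inspection of the different cases,'' so your case analysis (synthesis reactions have no incoming arcs; a polyreactant reaction on a cycle would force two distinct reactants to share an ancestor, contradicting the disjoint-ancestors hypothesis) supplies exactly the details the paper leaves implicit. In particular, your key step---entering the cycle via a reactant arc $R_j\to\rho$, exiting via a modification arc $\rho\to x$, and deriving the common ancestor $R_{j'}$ of $R_j$ and $R_{j'}$---is the substantive content that ``inspection of the cases'' must amount to, including the correct preliminary observation that the hypotheses force all reactant coefficients to equal $1$, so a polyreactant reaction has at least two distinct reactants.
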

\begin{proof}
By simple inspection of the different cases.
\end{proof}}

\begin{lemma}\label{lemmaProd}
    Under the hypotheses of the theorem,
    for all variables $x_i$'s with no common ancestors, the following equality holds:
    \[\E{\prod_i x_i} = \prod_i\E{x_i}\]
  \end{lemma}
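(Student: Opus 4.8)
The statement is a decorrelation result, so the plan is to show that the coordinates $x_i$ are driven by mutually independent portions of the CTMC, whence their joint law factorises. For each $i$ let $S_i$ denote the set of ancestors of $x_i$; this set is closed under taking ancestors, and by hypothesis the $S_i$ are pairwise disjoint. The whole argument rests on one structural fact about reactions, which I would establish first: no single reaction modifies variables lying in two distinct groups $S_i, S_j$, and every reaction that modifies some $y\in S_i$ has all of its reactants inside $S_i$. The second half is immediate, since a reactant of such a reaction $\rho$ is an ancestor of $y$ and $S_i$ is ancestor-closed. The first half is the delicate point: if $\rho$ modified both $y\in S_i$ and $w\in S_j$, then $\rho$ would be a common ancestor node of $y$ and $w$, and if $\rho$ carries a reactant that reactant is a common variable-ancestor of $x_i$ and $x_j$, contradicting disjointness.

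Granting the structural fact, I would pass to the random-time-change (Poisson) representation of the CTMC, writing $X(t)=X(0)+\sum_\rho Y_\rho\bigl(\int_0^t\Lambda_\rho(X(s))\,\d s\bigr)\zeta_\rho$ with independent unit-rate Poisson processes $Y_\rho$. By the structural fact the marginal process $(X_y)_{y\in S_i}$ is autonomous: it moves only under reactions that modify $S_i$, and the rates of those reactions depend only on variables of $S_i$. Since distinct groups are then driven by disjoint reaction sets, hence by independent clocks $Y_\rho$, and the initial state is deterministic, the sub-processes $(X_y(t))_{y\in S_i}$ are mutually independent. As each $x_i(t)$ is a coordinate of the $i$-th sub-process, the variables $x_1(t),\dots,x_n(t)$ are independent and $\E{\prod_i x_i}=\prod_i\E{x_i}$ follows.

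One can also stay at the level of moments, which fits the paper's generator formalism: applying the generator to $f=\prod_i x_i$ and using that each reaction changes at most one $x_i$ gives $\frac{\d}{\d t}\E{\prod_i x_i}=\sum_k\E{g_k(X_{S_k})\prod_{i\neq k}x_i}$, where the drift $g_k$ of $x_k$ involves only variables of $S_k$; comparing with $\frac{\d}{\d t}\prod_i\E{x_i}=\sum_k\E{g_k(X_{S_k})}\prod_{i\neq k}\E{x_i}$ reduces the claim to factorising $\E{g_k\,\prod_{i\neq k}x_i}$ across the disjoint groups $S_k$ and $\bigcup_{i\neq k}S_i$, which one closes by induction on the depth of the SCC-DAG. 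The hard part, common to both routes, is precisely the structural fact: ruling out a reactant-free constant-rate synthesis reaction feeding two distinct groups, which would genuinely correlate them. This is exactly what the disjoint-ancestor hypothesis secures once the reaction node itself is counted among the common ancestors, equivalently by Lemma~\ref{lemmaSCC}, and I expect that case analysis to be the main obstacle.
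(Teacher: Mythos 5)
Your proposal is correct, and your main route is genuinely different from the paper's. The paper stays entirely at the level of moments: it classifies the reactions affecting each variable into the sets $\pi_0,\pi_1,\pi_1',\pi_2'$, writes the generator ODEs for $\E{\prod_i x_i}$ and for $\prod_i\E{x_i}$, and shows by a double induction (outer induction on the maximal depth of the variables in the product, inner lexicographic induction on the tuple counting how many variables sit at each depth) that both quantities satisfy structurally identical closed ODE systems with the same initial condition, hence coincide by uniqueness; your second, generator-based sketch is essentially that argument. Your main route via the random-time-change representation buys more: once the structural fact is in place, the sub-process on each ancestor set $S_i$ is an autonomous jump process driven by its own family of unit-rate Poisson clocks, disjoint across groups, so the groups are fully \emph{independent as processes} — every mixed moment factorizes, not just the first ones — no induction on depth is needed, and the argument does not even use the theorem's unireactant/disjoint-reactant hypotheses: pairwise disjointness of the ancestor sets of the $x_i$ suffices. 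The price is extra probabilistic machinery (pathwise uniqueness and non-explosion of the time-changed representation, rates depending only on reactants, which mass action guarantees), which the paper's purely moment-level argument avoids.

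Your identification of the crux is also well placed, and in fact sharper than the paper's treatment. The only case in which a reaction can modify two groups without creating a common \emph{variable} ancestor is a reactant-free synthesis reaction producing into both, and this case genuinely breaks the statement: for $\_\Rightarrow A+B$ both variables equal the same Poisson count plus constants, so $\E{AB}-\E{A}\E{B}=\V{N}>0$. Under the paper's formal definition of ancestors — which counts only variables — disjointness does not exclude this; it is excluded only if the reaction node (equivalently, its SCC in the DAG of Sec.~\ref{sec:simg}) counts as a shared ancestor, exactly as you propose. Note that the paper's own proof glosses over this point: it asserts without qualification that variables with no common ancestors ``are not modified by a same reaction.'' Your attribution of the fix to Lemma~\ref{lemmaSCC} is inaccurate — that lemma concerns SCCs containing several reactions and says nothing about multi-product synthesis reactions — but the fix itself, counting the reaction node among the common ancestors, is the right one, and with it both your independence argument and the paper's induction go through.
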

\begin{proof}
  Since the variables $x_i$'s have no common ancestors, {they are not modified by a same reaction}.
  Therefore, for any reaction $\rho$, we have $\zeta_{\rho, x_i}\zeta_{\rho, x_j} = 0$ for any $x_i$ and $x_j$ with no common ancestor {and no $x_i$ and $x_j$ are in the same SCC}.
  We prove the result by induction on the maximum depth of the variables occurring in the product.
  
  \textit{Base case:} At depth $0$, {for any $i$,} the sets $\pi_0{(x_i)}$ and $\pi_2{'(x_i)}$ are empty {otherwise the SIMG would have one of this patterns $\rho\rightarrow x_i$ or $R_1, R_2 \Rightarrow \rho\rightarrow x_i$ with $\rho$ in its own SCC, using Lemma \ref{lemmaSCC}, and so $x_i$ would have a depth $>0$}.
  
  {By using the equation from Proposition \ref{eq: esp fct},} we have:

  \begin{align*}
      \frac{\d\E{\prod_{i\in I} x_i}}{\d t} &= \sum_{j\in I}\sum_{\rho\in\pi_1{(x_j)}}\lambda_\rho\E{\prod_{i\in I}x_i}\zeta_{\rho, x_j} + \sum_{\rho\in\pi_1'{(x_j})}\lambda_\rho\E{R\prod_{i\neq j}x_i}\zeta_{\rho, x_j} \\
    \end{align*}

    For the product of means, we have:
    \begin{align*}
      \frac{\d\prod_{i\in I} \E{x_i}}{\d t} & = \sum_{j\in I}\sum_{\rho\in\pi_1{(x_j)}}\lambda_\rho\prod_{i\in I}\E{x_i}\zeta_{\rho, x_j} + \sum_{\rho\in\pi_1'{(x_j)}}\lambda_\rho \E{R}\prod_{i\neq j}\E{x_i}\zeta_{\rho, x_j} \\
    \end{align*}

    All {variables} appearing in those equations are of depth at most $0$.
    {We can rewrite those equation for any variable in each SCC of the $x_i$'s.
    This gives us a whole system of equations with one equation for each possible product of variables. Since only those products appear in the system, it has a unique solution at a fixed initial state.}
    The system of equations for the mean of the product and the product of means have the same structure.
    Since they have the same initial conditions, they are equivalent.
    Therefore, $\E{\prod_i x_i} = \prod_i \E{x_i}$ holds for any product of variables of depth $0$.

    \textit{Inductive step:}
    Let us assume {$\triangle$} that the result holds for any product of variables with a maximum depth of $n$,
    and prove it for products of variables with a maximum depth of $n+1$.
    We  use induction on the $n+2$-tuple $(d_{n+1}, \ldots, d_0)$ where $d_i$ represents the number of variables of the product at depth $i$, with lexicographic order.
    \begin{itemize}[label=$\star$]
      \item \textit{Base case:}
      If there are no variable at depth $n+1$,
      then the induction hypothesis $\triangle$ gives the result for products of variables with a maximum depth of $n$.
      Thus, the result holds.
      \item \textit{Induction step:}
      Assume that the result holds for any product with at most $k$ variables of depth exactly $n+1$ (and all others at most $n$).
      Considering a product with exactly $k+1$ variables of depth exactly $n+1$.
      We denote its $n+2$-tuple by $\tau = (k+1, d_n,\ldots, d_0)$.
      Consider the differential equation system for the product $\prod_{i\in I} x_i$:
      \begin{align*}
        \frac{\d\E{\prod_{i\in I} x_i}}{\d t} &= \sum_{j\in I}\sum_{\rho\in\pi_0{(x_j)}}\lambda_\rho\E{\prod_{i\neq j}x_i}\zeta_{\rho, x_j}\\
        &\quad + \sum_{\rho\in\pi_1{(x_j)}}\lambda_\rho\E{\prod_{i\in I}x_i}\zeta_{\rho, x_j} \\
        &\quad + \sum_{\rho\in\pi_1'{(x_j)}}\lambda_\rho\E{R\prod_{i\neq j}x_i}\zeta_{\rho, x_j}\\
        &\quad + \sum_{\rho\in\pi_2'{(x_j)}}\lambda_\rho\E{\prod_\l R_\l\prod_{i\neq j} x_i}\zeta_{\rho, x_j}
      \end{align*}

      For $\rho\in\pi_0{(x_j)}$, the tuple $\tau_{j, \rho} = \tau[d(x_j) \leftarrow \tau(d(x_j)) - 1]$ is such that $\tau_{j, \rho} < \tau$.
      Then, the induction hypothesis applies to the product $\E{\prod_{i\neq j} x_i} = \prod_{i\neq j}\E{x_i}$.

      Since for $\rho\in\pi_2'{(x_j)}$, using Lemma \ref{lemmaSCC}, the $R_\l$ are higher in the hierarchy than $x_j$, and they have no common ancestors with $x_i$ for $i\neq j$. So we have $\tau_{j, \rho} < \tau$ and thus $\E{\prod_\l R_\l\prod_{i\neq j} x_i} = \prod_\l\E{R_\l}\prod_{i\neq j}\E{x_i}$.
      The same resoning applies to $\rho\in\pi_1'{(x_j)}$ such that $R$ is not from the SCC of $x_j$. In this case, $R$ is higher and independant from the other variables, and $\E{R\prod_{i\neq j} x_i} = \E{R}\prod_{i\neq j}\E{x_i}$.

      It remains to consider $\rho\in\pi_1'{(x_j)}$ such that $R$ is part of the SCC of $x_j$, and also $\rho\in\pi_1{(x_j)}$ which is the same but with $R=x_j$.
      In this case, we can write the system of differential equations for $R\prod_{i\neq j} x_i$ for all $R$ of the SCC\@.
      Since we have the same number of equations as unknown functions and the initial values are fixed, we conclude that $\E{R\prod_{i\neq j}x_i}(0) = \left(\E{R}\prod_{i\neq j}\E{x_i}\right)(0)$,
      which implies $\E{R\prod_{i\neq j}x_i} = \E{R}\prod_{i\neq j}\E{x_i}$ for the considered product.
    \end{itemize}
\end{proof}

\section{SIMG Ancestor Condition Checking Algorithm}\label{sec:algo}

\begin{algorithm}[ht]
  \caption{Checking of the graphical condition}\label{algo}
  \SetKwData{Sccs}{sccs}\SetKwData{Scc}{scc}\SetKwData{Ancestors}{ancestors}\SetKwData{x}{x}\SetKwData{r}{$\rho$}
  \SetKwFunction{Depth}{depth}\SetKwFunction{Add}{add}
  \SetKwSwitch{Match}{Case}{Other}{match}{:}{$\bullet$}{$\_$}{}{}
  \SetKwInOut{Input}{input}\SetKwInOut{Output}{output}

  \Input{SIMG of an RS of size $n$ (number of variables and reactions)}
  \Output{Is the graphical condition of Thm.~\ref{thm} verified?}
  \BlankLine{}
  \Sccs{} calculus \tcp*[r]{in $O(n)$}
  Sort \Sccs{} with \Depth{\Scc1} $\leq$ \Depth{\Scc2} $\leq\ldots$\;
  \ForEach{\Scc$\in$ \Sccs}{
    Check \Scc is unireactant \tcp*[r]{in $O(|\Scc|)$}\nllabel{checking algo unireactant condition}
    \Ancestors$\leftarrow$ \Scc\;\nllabel{checking algo scc ancestors initialisation}
    \ForEach{variable \x $\in$ \Scc}{\label{cheking algo for variables in scc}
      \ForEach{reaction \r with \x as SIMG successor}{
        \Match{reactants of \r with}{
          \lCase{$\emptyset$}{do nothing}
          \lCase{$\{R\}$}{\label{checking algo add ancestors unireactant} \Ancestors.\Add{$R$.\Ancestors}}
          \Case{$\{R_1\ldots R_k\}$}{\label{checking algo polyreactant case}
            \eIf{$\forall i<j, R_i$.\Ancestors$\cap R_j$.\Ancestors$ = \emptyset$}{\label{checking algo disjoint ancestors}
              \Ancestors.\Add{$\cup_{i = 1} ^k R_i$.\Ancestors}\label{checking algo add ancestors poly}
            }{
              \Return{False}
            }
          }
        }
      }
    }
    \ForEach{variable \x $\in$ \Scc}{
      \x.\Ancestors$\leftarrow$ \Ancestors\;
    }
  }
  \Return{True}
  \BlankLine{}

\end{algorithm}

The construction of the SIMG of an RS can be easily computed from the reaction graph and is of linear size.
Now, Alg.~\ref{algo} provides a means to verify whether the graphical condition is fulfilled given the SIMG of an RS.
By simple inspection of the loops, one can show:
\begin{proposition}(termination and complexity)
Alg.~\ref{algo} terminates in $O(nd^2)$ steps where $n$ is the number of variables and reactions
and $d$ is the maximal number of reactants for a reaction.
\end{proposition}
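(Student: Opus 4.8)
The plan is to treat termination and the complexity bound separately, since termination is essentially immediate while the complexity bound requires a careful accounting of each phase of the algorithm against a fixed cost model for the set operations on ancestor sets.

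For termination, I would observe that every loop in Alg.~\ref{algo} ranges over a fixed finite collection: the set of SCCs, the variables of a given SCC, the reactions having a given variable as SIMG successor, and the (at most $d$) reactants of a reaction. None of these collections is modified during its own iteration, and the two subroutines invoked — the SCC computation and the depth-based sort — are standard terminating procedures. Hence the whole procedure halts after finitely many steps, and the only early exit (\textbf{return} \emph{False}) can only shorten the run.

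For the complexity bound, I would walk through the algorithm phase by phase. The SCC computation runs in $O(n)$ by a standard linear-time algorithm, using that the SIMG has linear size; sorting the SCCs by depth is a topological sort on the DAG of SCCs, again $O(n)$. The body is governed by three nested loops: over SCCs, over the variables of each SCC, and over the reactions modifying each variable. The first two together visit each variable exactly once, contributing $O(n)$ variable-visits; the innermost loop, summed over all variables, iterates once per modification arc of the SIMG, hence $O(n)$ times in total. The final propagation loop re-assigning ancestor sets contributes another $O(n)$. The place where $d$ enters is the polyreactant case (line~\ref{checking algo disjoint ancestors}): checking pairwise disjointness of the reactants' ancestor sets tests all $\binom{k}{2} = O(d^2)$ pairs, with $k \le d$. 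Charging each pairwise test and each set addition as a unit step, the innermost loop costs $O(d^2)$ per execution over $O(n)$ executions, so the total is $O(nd^2)$, which dominates the $O(n)$ from the other phases.

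The main obstacle is precisely this cost model for the set operations. If ancestor sets are stored naively, a single disjointness test may cost $O(n)$, inflating the bound to $O(n^2 d^2)$; to recover $O(nd^2)$ one must argue that each disjointness test and each union/addition is performed in (amortised) constant time under a suitable representation of ancestor sets, or else read ``step'' at the level of set operations rather than bit operations. I would therefore state this convention explicitly at the outset of the proof; once it is fixed, the remaining accounting is routine and follows the phase-by-phase inspection of the loops sketched above.
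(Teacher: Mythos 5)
Your proposal is correct and takes essentially the same approach as the paper, which offers no detailed argument at all---it simply asserts the bound ``by simple inspection of the loops,'' and your phase-by-phase accounting ($O(n)$ for the SCC computation, depth sort and loop traversals, times $O(d^2)$ pairwise disjointness tests in the polyreactant case) is exactly that inspection carried out explicitly. Your caveat about the cost model for the ancestor-set operations (unit-cost set operations versus a naive representation, which would inflate the bound) is a genuine subtlety that the paper silently glosses over, so making that convention explicit only strengthens the argument.
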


\begin{proposition}(correctness)
Alg.~\ref{algo} checks the condition of Thm.~\ref{thm}.
\end{proposition}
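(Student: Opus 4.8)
The plan is to prove soundness and completeness of Alg.~\ref{algo} with respect to the condition of Thm.~\ref{thm} by a single induction along the topological order in which the SCCs are visited. First I would record the loop invariant driving the induction: after the iteration of the outer loop processing an SCC $S$, (i)~for every variable $x\in S$ the field \texttt{x.Ancestors} equals the true set of SIMG ancestors of $x$ restricted to variables, and (ii)~every reaction modifying a variable of $S$ has passed the two structural checks at lines~\ref{checking algo unireactant condition} and~\ref{checking algo disjoint ancestors}, the algorithm having returned \texttt{False} exactly when one of them violates the condition of Thm.~\ref{thm}. The sorting step guarantees that when $S$ is processed every strictly higher SCC has already been handled, so the ancestor fields of any variable lying above $S$ are final.

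The core of the argument is the correctness of the ancestor computation. I would first observe that all variables of a common SCC share the same ancestor set, which justifies accumulating a single set \texttt{Ancestors}, initialised to $S$ at line~\ref{checking algo scc ancestors initialisation} and copied to every member of $S$ at the end. Next, since every path reaching a variable $x$ terminates with an arc $R\to\rho\to x$ in which $\rho$ modifies $x$ and $R$ is a reactant of $\rho$, the ancestors of $x$ are exactly $\{x\}$ together with the ancestors of the reactants of every reaction modifying $x$; ranging over all $x\in S$, this is precisely what the inner loops at lines~\ref{cheking algo for variables in scc}, \ref{checking algo add ancestors unireactant} and~\ref{checking algo add ancestors poly} collect. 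For a reactant $R$ in a strictly higher SCC, \texttt{R.Ancestors} is final by the invariant; the only delicate case is a reactant $R$ lying in $S$ itself, which can occur solely in the unireactant branch (line~\ref{checking algo add ancestors unireactant}), and there the external ancestors of $R$ are supplied by the other reactions modifying $R$, reactions that are themselves traversed in the same pass over $S$, so the shared accumulator still ends up containing them.

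It then remains to match the two checks with the two clauses of Thm.~\ref{thm}. For the disjointness test at line~\ref{checking algo disjoint ancestors}, I would argue that whenever a polyreactant reaction $\rho$ producing some $x\in S$ is reached, no reactant $R_i$ of $\rho$ can lie in $S$: otherwise $R_i\to\rho\to x$ would close a cycle through $\rho$, placing $\rho$ in a nontrivial SCC, and the check of line~\ref{checking algo unireactant condition}, which rejects any nontrivial SCC containing a polyreactant reaction, would already have returned \texttt{False} in accordance with Lemma~\ref{lemmaSCC}. Hence every reactant of such a $\rho$ sits strictly above $S$ with a finalised ancestor field, and the test at line~\ref{checking algo disjoint ancestors} evaluates exactly the disjoint-ancestors predicate. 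The unireactant check of line~\ref{checking algo unireactant condition}, read through Lemma~\ref{lemmaSCC}, enforces the complementary structural side, namely that every polyreactant reaction is alone in its SCC; the two checks are therefore jointly sound (no spurious \texttt{True}) and complete (no spurious \texttt{False}), as read off the two implications of the invariant at the final iteration.

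I expect the main obstacle to be the same-SCC reactant case flagged above: making rigorous that the single shared accumulator reconstructs the full ancestor set of \emph{every} variable of $S$, including the contributions routed through internal cycles, rather than proving each variable's field in isolation. Phrasing the loop invariant at the granularity of the whole SCC, and invoking the fact that the unireactant guard ensures the disjointness test is only ever applied to reactants whose ancestor fields are already final, is what I would use to dispose of this difficulty cleanly.
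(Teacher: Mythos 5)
Your proposal is correct and follows essentially the same route as the paper's own proof: an induction over the SCCs in depth (topological) order establishing that the \texttt{Ancestors} fields coincide with the true SIMG ancestors — including the same treatment of the delicate same-SCC reactant case via the shared per-SCC accumulator — followed by matching the checks at lines~\ref{checking algo unireactant condition} and~\ref{checking algo disjoint ancestors} to the two clauses of Thm.~\ref{thm}. Your argument is in fact slightly more explicit than the paper's where it invokes Lemma~\ref{lemmaSCC} to guarantee that the disjointness test is only ever applied to reactants whose ancestor fields are already final, a point the paper leaves implicit.
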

\begin{proof}
  Let us prove its correctness by first showing that for each variable $x$,
  $x.ancestors$ are indeed the real ancestors of $x$.
  Then we'll see why this implies the correctness of the algorithm.

  First, it is clear that all members of a SCC have the same ancestors.
  
  Now, let us prove by induction on the depth of SCCs that $x.ancestors$ is actually the set of ancestors of $x$.
  
  \begin{itemize}
    \item At depth $0$:
      \begin{itemize}
        \item $\supseteq$ The only ancestors of any $x$ at depth $0$ are the variables from its SCC\@.
          They are added in Alg.~line~\ref{checking algo scc ancestors initialisation}
        \item $\subseteq$ The only reactions leading to any $x$ (in the SIMG) are the ones from its SCC\@.
          Then only Alg.~line~\ref{checking algo add ancestors unireactant} adds ancestors but it is nothing else than the SCC itself.
      \end{itemize}
      So ancestors at depth 0 are well computed.
    \item At depth $n$:
      \begin{itemize}
        \item $\supseteq$ If $A$ os an ancestor of $x$
          \begin{itemize}
            \item Either $A$ is in the same SCC than $x$ and is added in Alg.~line~\ref{checking algo scc ancestors initialisation}
            \item Either it is not and there exists a path in the SIMG such that $A \to \ldots\to R\to\square\to x'$, with some $R$ such that $depth(R)<n$ and $x'$ is in the same SCC than $x$.
              Then by induction hypothesis, $A.ancestors$ is well formed.
              So $R\in A.ancestors$ and $A.ancestors$ is adding to $x'.ancestors = x.ancestors$ when considering $x'$ in Alg.~line~\ref{cheking algo for variables in scc}.
          \end{itemize}
          Then all ancestors of $x$ are in $x.ancestors$.
        \item $\subseteq$ If $A$ is in $x.ancestors$.
          \begin{itemize}
            \item Either it is added with Alg.~line~\ref{checking algo add ancestors poly}
              and was in some $R_i.ancestors$ such that $R_i\to\square\to x'$ with $x'$ in the same SCC than $x$.
              So $depth(R_i) < n$ and then by induction hypothesis $A$ is an ancestor of $R_i$ and so of $x'$ and of $x$.
            \item Either it is added with Alg.~line~\ref{checking algo add ancestors unireactant} and so
              $R\to\square\to x'$ with $x'$ in the SCC of $x$ and $A$ in $R.ancestors$.
              The only way to be adde is then with $R.ancestors\neq\emptyset$ which can't happend with $R$ in the same SCC\@.
              So $depth(R)<n$ and the same reasoning as just before with induction hypothesis and transitivity of ancestors leads to $A$ as an ancestor of $x$.
          \end{itemize}
          So all variables in $x.ancestors$ are true ancestors of $x$.
      \end{itemize}
      Therefore $x.ancestors$ is well computed for each $x$ at depth $n$.
  \end{itemize}
  
  By the induction principle, it is the case for any depth.
  Finally, checking the condition of Thm.~\ref{thm} is done by Alg.~line~\ref{checking algo unireactant condition} and line~\ref{checking algo disjoint ancestors}.
\end{proof}

\section{Examples}

\begin{example}\label{non trivial exemple 2}
  One can check that the following RS satisfies the condition of Thm.~\ref{thm} on all variables.
  Fig.~\ref{fig: non trivial exemple 2} shows the equality between the ODE trace and the mean CTMC trace, together with their high variance.
  \begin{center}\footnotesize
  \begin{tabular}{ccc}
    {$\begin{aligned}
      A+B & \Rightarrow A + B + C \\
      C & \Rightarrow D\\
      D & \Rightarrow \_\\
      D & \Rightarrow D + E\\
      E & \Rightarrow E + F + G\\
      E & \Rightarrow \_\\
      F & \Rightarrow \_\\
      F & \Rightarrow F + C\\
      G & \Rightarrow H\\
      E + I & \Rightarrow E + I + H\\
      H & \Rightarrow \_\\
      D + J & \Rightarrow D + J + K\\
      K & \Rightarrow \_\\
      K + L & \Rightarrow K + L + M\\
      \_ & \Rightarrow M\\
      M & \Rightarrow \_\\
    \end{aligned}$}&\ \ \ \ \ \ \ \ \ \ \ \ \ \ \ &
    \tikzfig{non_trivial_exemple2}
  \end{tabular}
  \end{center}

    \begin{figure}[h]
      \centering
      \begin{subfigure}[b]{0.49\textwidth}
      \includegraphics*[width=\textwidth]{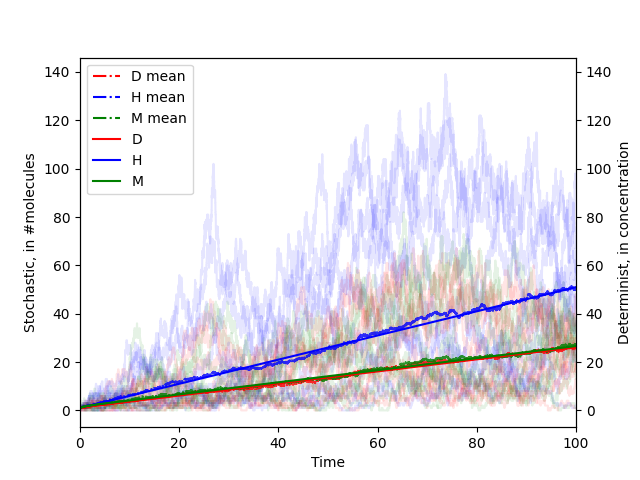}
      \caption{Stochastic mean and determinist solutions.}\label{non trivial exemple 2 plot}
      \end{subfigure}
      \hfill
      \begin{subfigure}[b]{0.49\textwidth}
        \centering
        \includegraphics*[width=\textwidth]{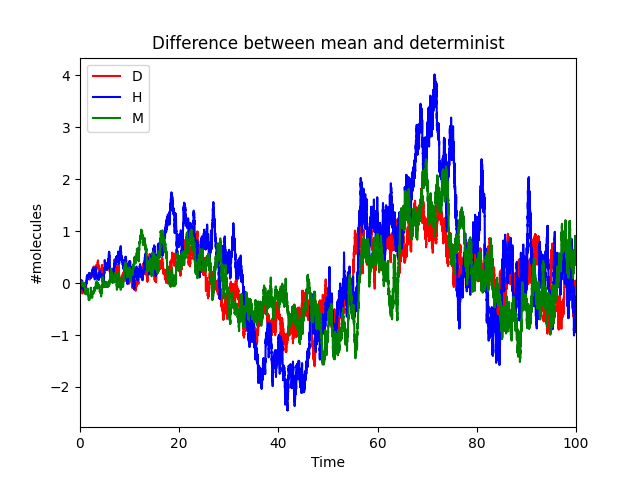}
        \caption{Difference between mean stochastic and determinist solutions.}\label{non trivial exemple 2 diff}
      \end{subfigure}
      \caption{ODE trace and 100 CTMC traces with mean in Ex.~\ref{non trivial exemple 2}.}\label{fig: non trivial exemple 2}
    \end{figure}

\end{example}

\begin{example}
  The following RS implements arithmetic multiplication~\cite{FLBP17cmsb}.
  Indeed, with $A$ and $B$ constant,
in the ODE semantics we get at stable state, $\E{C} = AB$.
Thm.~\ref{thm} applies on all variables and thus entails the correctness of the result in the mean stochastic interpretation as well:
  \begin{center}\begin{tabular}{ccc}
  {$\begin{aligned}
    A + B & \Rightarrow A + B + C\\
    C & \Rightarrow \_\\
  \end{aligned}$} &\ \ \ \ \ \ \ \ \ \ \ \ \ \ \ &
\tikzfig{multiplication} \\    
    \end{tabular}
    \end{center}

\end{example}

\begin{example}\label{counter_example}
  Let us consider the following RS with SIMG violating the conditions:

\begin{center}\begin{tabular}{ccc}
{$\begin{aligned}
  A + B & \Rightarrow A + B + C\\
  C & \Rightarrow \_ \\
  C & \Rightarrow C + D\\
  D & \Rightarrow \_ \\
  C + D & \Rightarrow C + D + E\\
  E & \Rightarrow \_ \\ 
  \end{aligned}$} &\ \ \ \ \ \ \ \ \ \ \ \ \ \ \ &
\tikzfig{couter_example_only_mono_scc}
    \end{tabular}
\end{center}
  
  \begin{figure}[h!]
    \centering
    \includegraphics*[scale=0.45]{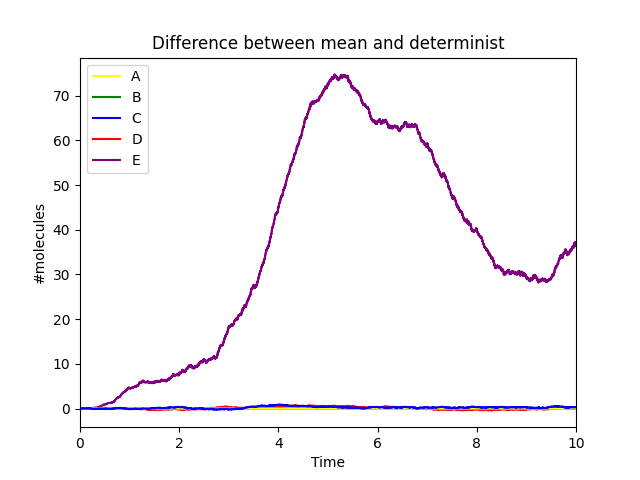}
    \caption{Difference between the mean over 100 iterations and the deterministic solution for Ex.~\ref{counter_example} with $A(0)=5$ and $B(0)=11$}\label{counter_example_figure}
  \end{figure}

  The difference of behaviours between the stochastic and deterministic interpretations is plotted in Fig.~\ref{counter_example_figure} for the different variables.
  Interestingly, one can observe no difference on all variables except E.
  This can be proven on the SIMG as follows.
Because the sub-graph until $D$ verifies the graphical condition,
we have $\E{A} = \sigma_A = a$, $\E{B} = \sigma_B = b$ and $\E{C}=\sigma_C$ and $\E{D} = \sigma_D$.

\noindent
Then $\displaystyle\frac{\d\sigma_C}{\d t} = ab + \sigma_C$ so, with $\sigma_C(0) = 0$ we have $\sigma_C = ab(1-\e^{-t})$.
Thus
\[\begin{aligned}
  \frac{\d\xi}{\d t}(t) = \frac{\d (\E{C^2} - \sigma_C^2)}{\d t}(t) & = \E{ab ({(C+1)}^2 - C^2) + C({(c-1)}^2 - C^2)}(t) \\ &\quad- (2(ab - \sigma_C)\sigma_C)(t)\\
                        & = (ab + \sigma_C -2\xi)(t)
\end{aligned}\]
and $\xi(t) = ab(1-\e^{-t})$.

The same reasoning leads for $\psi = \E{CD} - \sigma_C\sigma_D$ to
$\psi' + 2\psi = \xi$ and so $\psi(t) = \frac{ab}{2}(1 - 2\e ^{-t} + \e^{-2t})$.

Then $\phi = \E{E} - \sigma_E$ verifies $\phi' + \phi = \psi$
and so $\phi = \frac{ab}{2}(1 - 2t\e^{-t} - \e^{-2t})\neq 0$.

\end{example}

\section{Equality Property for Cosine Oscillatory RS}\label{sec:cos}

  The sine and cosine functions of time constitute a fundamental example of oscillator.
  A CRN implementing $sin(t)$ and $cos(t)$ with 4 molecular species was given in~\cite{FLBP17cmsb}
  for illustrating the Turing-completeness of continuous CRNs,
  and the compiler of computable mathematical functions in CRNs implemented in BIOCHAM\footnote{\url{http://lifeware.inria.fr/biocham}}~\cite{CFS06bi}.
  That automatically generated RS, called cosRS, uses the dual-rail encoding of negative values as the difference between the concentrations of two reactants
  with a fast annihilation reaction between both with rate $f$:
\begin{center}\small$\begin{aligned}
  A & \Rightarrow A + B \\
  B & \Rightarrow B + C \\
  C & \Rightarrow C + D \\
  D & \Rightarrow D + A \\
  A + C & \Rightarrow^f \_ \\
  B + D & \Rightarrow^f \_ \\
  \end{aligned}$
\end{center}

With initial conditions $A(0) = 1$ and $B(0) = C(0) = D(0) = 0$,
we have ${A-C} = \cos(t)$ and ${D-B} = \sin(t)$ in the ODE semantics of cosRS.
We show that the same holds for $\E{A-C}$ and $\E{D-B}$ in the stochastic CTMC semantics,
and does generalize to any initial conditions.

\begin{theorem}
{For any initial conditions, the mean stochastic traces of $A - C$ and $D-B$ are equal to their ODE traces in the ODE semantics of cosRS.}
\end{theorem}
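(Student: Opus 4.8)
The plan is to bypass the SIMG ancestor condition of Theorem~\ref{thm}, which in fact \emph{fails} on this network: the four reactions $A\Rightarrow A+B$, $B\Rightarrow B+C$, $C\Rightarrow C+D$, $D\Rightarrow D+A$ place $A,B,C,D$ in a single strongly connected component, so the reactants of each annihilation reaction ($A$ and $C$ for $A+C\Rightarrow^f\_$, and $B$ and $D$ for $B+D\Rightarrow^f\_$) share ancestors. Since the general theorem does not apply, I would argue directly, exploiting an exact cancellation that occurs specifically for the two linear combinations $A-C$ and $D-B$.

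First I would write out the mean equations given by Proposition~\ref{eq: diff esp}. The unireactant reactions contribute linear terms, while the two annihilation reactions, each of which decrements both of its reactants by one, contribute the same nonlinear moment $f\E{AC}$ to the equations for $\E{A}$ and $\E{C}$, and the same $f\E{BD}$ to those for $\E{B}$ and $\E{D}$:
\begin{align*}
\frac{\d\E{A}}{\d t} &= \E{D} - f\E{AC}, & \frac{\d\E{B}}{\d t} &= \E{A} - f\E{BD},\\
\frac{\d\E{C}}{\d t} &= \E{B} - f\E{AC}, & \frac{\d\E{D}}{\d t} &= \E{C} - f\E{BD}.
\end{align*}
Setting $\bar u = \E{A-C}$ and $\bar v = \E{D-B}$ and subtracting the equations pairwise, the intractable second moments $\E{AC}$ and $\E{BD}$ cancel exactly, leaving the closed linear system $\bar u' = \bar v$ and $\bar v' = -\bar u$. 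The identical algebra applied to the ODE semantics gives $(\sigma_A-\sigma_C)' = \sigma_D-\sigma_B$ and $(\sigma_D-\sigma_B)' = -(\sigma_A-\sigma_C)$. Because the CTMC starts from a fixed state we have $\bar u(0) = \sigma_A(0)-\sigma_C(0)$ and $\bar v(0) = \sigma_D(0)-\sigma_B(0)$, so by uniqueness of the solution of this two-dimensional linear ODE the mean stochastic traces of $A-C$ and $D-B$ coincide with their ODE traces, for every initial condition.

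The key observation, and essentially the only thing one must notice, is that the annihilation reaction $A+C\Rightarrow^f\_$ lowers $A$ and $C$ by the same amount and therefore contributes nothing to the rate of change of $A-C$; symmetrically for $B+D\Rightarrow^f\_$ and $D-B$. This is exactly what makes the otherwise uncomputable correlation terms disappear, even though the full network violates the ancestor condition because all four variables lie in one SCC.

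The one point requiring care, and the main potential obstacle, is verifying that the stochastic mass-action propensities are genuinely equal before cancellation: one must check, using the mass-action definition $\Lambda=\lambda\prod_i\prod_{k=0}^{\l_i-1}(R_i-k)$, that each annihilation reaction has propensity exactly $f\,A\,C$ (resp.~$f\,B\,D$) with \emph{no} falling-factorial correction. This holds precisely because both reactant multiplicities equal $1$, and it is what guarantees that the two occurrences of $\E{AC}$ (resp.~$\E{BD}$) are literally the same term and cancel, rather than merely being of comparable magnitude.
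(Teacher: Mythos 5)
Your proposal is correct and follows essentially the same route as the paper: both derive the mean equations from Proposition~\ref{eq: diff esp}, observe that the correlation terms $f\E{AC}$ and $f\E{BD}$ cancel exactly in the differences $\E{A-C}$ and $\E{D-B}$ (the paper compresses this into the phrase ``by linearity''), obtain the closed linear oscillator system, and conclude by uniqueness given equal initial conditions. Your write-up is in fact more explicit than the paper's, notably in spelling out the four mean equations, in checking that unit reactant multiplicities give propensities $fAC$ and $fBD$ without falling-factorial corrections, and in noting that Theorem~\ref{thm} itself cannot be invoked since all four variables share one SCC.
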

\begin{proof}
  We just have to show that 
    the mean stochastic CTMC traces of $D - B$ and $A - C$ obey the differential equations of the ODE semantics of cosRS.
  
  Variables $A$ and $C$ represent the positive and negative parts of the same function, $A-C$,
while $B$ and $D$ represent similarly another function, $D-B$.

\noindent
By linearity, we have $\displaystyle\frac{\d\E{A-C}}{\d t} = \E{D-B}$ and $\displaystyle\frac{\d\E{D-B}}{\d t} = \E{C-A}$.

\noindent
Therefore, we get $\displaystyle\frac{\d^2\E{A-C}}{\d t^2} = -\E{A-C}$ and $\displaystyle\frac{\d^2\E{D-B}}{\d t^2} = -\E{D-B}$.

\noindent
\end{proof}

\section{Evaluation in \texttt{Biomodels} and Partial Approximations}

Alg.~\ref{algo} has been applied to the first 2000 models of the database BioModels~\cite{BioModels2020}.
The only models satisfying the condition did not contain polyomolecular reactions.
As mentioned in the introduction, the property for at most unimolecular reaction models is well-known
and our graphical condition for all variables does not apply to more models in Biomodels repository.
Nevertheless, even when the conditions of Thm.~\ref{thm} are not satisfied on the whole SIMG of the model,
they can be used to identify a subset of reactant variables for which the ODE solution is indeed equal to the mean CTMC behaviour.

The key observation is that the condition is additive with respect to the variables ordering based on their depth in the SIMG.
If variables do not violate the condition, then it is true for all previous variables considered in the algorithm.
Therefore, the algorithm only needs to focus on the partial portion of the graph where all ancestors satisfy the condition.
The result then holds for those variables and means nothing for the other variables.

A partial application of Thm.~\ref{thm} can thus be used to identify RS input variables
which may be non constant, but with provably equal differential and stochastic mean dynamics.
Independently of the more complex behaviours of the other variables,
the theorem guaranties that the ODE semantics provides the exact mean concentration of these input variables over time.

This is illustrated here in Ex.~\ref{counter_example} where the partial condition applies to variables A, B, C and D.
This explains why in Fig.~\ref{counter_example_figure}, variables A, B, C and D follow the determinist ODE solution in mean, unlike variable E.
{In BioModels, the statistics are however not very significant since many reactions have no mass-action law kinetics,
  or have non well-formed kinetics \cite{FGS15tcs}.}

\section{Conclusion and Perspectives}

Kurtz's theorem is the state-of-the-art result to relate the mean stochastic CTMC and deterministic ODE interpretations of a reaction system.
As a limit theorem however, it does not apply to bounded systems
and leads to wrong results if applied for instance to population dynamics models in finite space.

We have shown that the strong equality holds between the mean CTMC and deterministic ODE interpretations
independently of the kinetics and without any limit approximation,
provided some graphical conditions are satisfied on the structure the RS.
The interest of our result is that it is not a limit theorem and can thus be applied to bounded dynamical systems with low values for the variables,
such as models of gene regulatory networks if we think that one gene is in single copy in a cell.

The evaluation of our graphical conditions in the reaction models in Biomodels shows, on the one hand, that they are satisfied for all species,
only in models containing no polymolecular reactions,
but, on the other hand, that they can be used to identify in each model a subset of molecular species for which the property holds.
Interestingly, we have also shown that the basic oscillatory RS implementing the sine and cosine functions
enjoys the strong equality property for all species.

These results also leave some new questions open.
First, back to our first example,
can we determine the mean stochastic behaviour of the Lotka-Volterra system extended with immigration and emigration reactions
for which we have shown the existence of sustained stochastic oscillations ?
Second, can we generalize the reasoning done for the sine-cosine RS,
to prove the property for general RSs similarly involving annihilation reactions as sole degradation reactions ?
Third, our assumptions require no correlations between reactants,
while some variables may be balanced and yield the same result, they do not meet our condition.
These questions should deserve further investigations.

\subsection*{Acknowledgements} We are grateful to the reviewers of CMSB for their comments,
which where useful to improve the presentation of our results,
and to Mathieu Hemery for discussions, and for improving the parsing of SBML models in BIOCHAM.
This research was partially funded by ANR-20-CE48-0002 project Difference.
  
  %
%
%
\bibliographystyle{splncs04}
\bibliography{bilbio}

\end{document}